\newtheorem{theorem}{Theorem}[section]
\newtheorem{assumption}{Assumption}[section]
\newtheorem{proposition}{Proposition}[section]
\newtheorem{corollary}{Corollary}[theorem]
\newtheorem{remark}[theorem]{Remark}
\newtheorem{lemma}[theorem]{Lemma}
\newtheorem{definition}{Definition}[section]
\newtheorem{example}[theorem]{Example}
\title{Data Sharing Markets}
\author{%
  Mohammad Rasouli, Michael Jordan
  % examples of more authors
  % \And
  % Coauthor \\
  % Affiliation \\
  % Address \\
  % \texttt{email} \\
  % \AND
  % Coauthor \\
  % Affiliation \\
  % Address \\
  % \texttt{email} \\
  % \And
  % Coauthor \\
  % Affiliation \\
  % Address \\
  % \texttt{email} \\
  % \And
  % Coauthor \\
  % Affiliation \\
  % Address \\
  % \texttt{email} \\
}
\begin{document}

\maketitle

\begin{abstract}
With the growing use of distributed machine learning techniques, there is a growing need for data markets that allows agents to share data with each other. Nevertheless data has unique features that separates it from other commodities including replicability, cost of sharing, and ability to distort. We study a setup where each agent can be both buyer and seller of data. For this setup, we consider two cases: bilateral data exchange (trading data with data) and unilateral data exchange (trading data with money). We model bilateral sharing as a network formation game and show the existence of strongly stable outcome under the \emph{top agents property} by allowing \emph{limited complementarity}. We propose \emph{ordered match} algorithm which can find the stable outcome in $O(N^2)$ ($N$ is the number of agents). For the unilateral sharing, under the assumption of \emph{additive cost structure}, we construct competitive prices that can implement any social welfare maximizing outcome. Finally for this setup when agents have private information, we propose \emph{mixed-VCG} mechanism which uses \emph{zero cost data distortion} of data sharing with its \emph{isolated impact} to achieve budget balance while truthfully implementing socially optimal outcomes to the exact level of budget imbalance of standard VCG mechanisms. Mixed-VCG uses data distortions as \emph{data money} for this purpose. We further relax zero cost data distortion assumption by proposing \emph{distorted-mixed-VCG}. We also extend our model and results to data sharing via incremental inquiries and differential privacy costs.
\end{abstract}
\section{Introduction}

The recent rapid advances in machine learning algorithms, hardware, and platforms has had a major effect on fielded applications---access to data has become the limiting constraint in many such applications.  The limitation often arises because data is generated in a decentralized fashion, held privately under a variety of ownership models. There is accordingly a growing need to study distributed machine learning problems in which self-interested agents are incentivized to share their data.  Designing such incentives requires understanding the costs that users incur in sharing their data, including communication and privacy costs, and trading off those costs against possible benefits from data analysis.  The costs and benefits will generally be agent- and type-specific.  The overall problem is a complex market-design problem that aims at incentivizing the sharing of data while achieving computational, economic, and statistical efficiencies \citep{agarwal2019marketplace,  fernandez2020data, liang2020data}.

As an example consider banks who are interested in deploying up-to-date fraud detection algorithms. Each bank has its own data on financial frauds coming from its historical record of transactions.  But such data will often not have good coverage and may be out of date.  Rather than relying solely on its own data, banks would ideally share data, obtaining a better model than would be possible without sharing.  As another example, hospitals each have their own set of patient data; e.g. radiology data, genomic data, or epidemiological data. The data available to a single hospital may be not sufficient for accurate inference or particularly for causal inference.  Indeed, the per hospital data may be biased in various ways.  Ideally, hospitals would share data to mitigate these and other problems.

Data has unique features as a commodity that make data markets distinct from other markets. First, in some cases data is traded for other data (bilateral sharing) as opposed to an exchange of data for money (unilateral sharing). In bilateral sharing, the agents gain access to each others' data. Bilateral sharing is natural in use cases in which it is hard to evaluate and attach a price to data \citep{mehta2019sell}.  Indeed, the fundamental challenge of quantifying the value of data in algorithmic predictions and decision-making is an active research area \citep{ghorbani2019data,mehta2019sell}. It is also natural when regulations do not permit the exchange of money for data; this would be the case for the sharing of patient data. Bilateral trade without money also arises in other markets. Companies exchange their pool of patents with each other because they can not evaluate and attach a price to their patent pool. Also, regulatory limits on trading commodities with money also exist for various forms of matching markets, including kidney exchange markets. Relationships, such as friendships or marriage, are other forms of markets with bilateral exchange where money is not used. Even within the context of such examples, however, data markets are distinct.  In particular, data is traded often at some privacy and communication cost for agents (in contrast to patent market) and one agent can share data with multiple agents through bilateral exchanges (in contrast to organ donation or marriage markets). 

Our first problem of study in this paper, in the tradition of the seminal paper of \cite{gale1962college}, is
to find conditions under which stable outcomes exist for bilateral sharing, and can be found in polynomial time. We model bilateral sharing as a general \emph{network/graph formation game}, similar to \cite{jackson1996strategic} and \cite{jackson2005strongly}. This game is a more general version of hedonic games that are known not to have core stable outcomes in general \citep{bogomolnaia2002stability}. We also consider the notion of strong stability. In this case, without imposing a utility structure on the agents, we find conditions under which the game has a strongly stable outcome and we design an algorithm that can find such outcomes in $O(N^2)$ time, where $N$ is the number of agents. Our framework weakens some of the unrealistic assumptions that have traditionally been imposed in the literature on strongly stable outcomes \citep{ostrovsky2008stability}; in particular, we allow \emph{cyclicity} in the trades and \emph{limited complementarity}. We achieve this weakening by isolating a property---the \emph{top agent property}---which defines an ordering for agents that is based on how much each agent is preferred by all other agents. Limited complementarity then arises by ensuring that no subset of agents ranked below an agent can be preferred to that agent. Our notion of top agent property has similarities to that of \emph{ordered coalitions} in \cite{bogomolnaia2002stability}, the notion of the \emph{common ranking property} \citep{farrell1988partnerships,caskurlu2019hedonic} and the \emph{top coalition property} \citep{banerjee2001core} used in establishing the existence of core stable outcomes in hedonic games. 

We also consider unilateral data sharing where agents can assess the value of data and can trade data for money.  An example of this paradigm arises in the case of renewable energy forecasts \citep{gonccalves2020towards}. Here we build on the seminal work of \cite{shapley1971assignment}, and study conditions that allow existence of competitive prices in these markets which also implement social-welfare-maximizing sharing. Our data sharing model is a one-sided, many-to-many sharing protocol, limited to trading of a single indivisible good in each directed trade. We show that if the agents' cost of sharing is additive with respect to the agents that gain access to their data, then there exists set of competitive prices, and every social-welfare-maximizing outcome can be implemented with such competitive prices in a way that it will be a stable outcome.

Data has other unique features that influence the design of markets. In particular, the quality of data shared with other agents can be distorted upward or downward at almost zero cost, and without impact on the quality of data passed to other agents. There are several ways in which the quality of data can be distorted. For example, downward distortion can be achieved by removing part of one agent's data when passed to the another agent, and upward distortion can be obtained by adding new data that is delivered to an agent (e.g., by generating new data by using generative adversarial networks). Alternatively, noise can be added to the data, or in more complicated setups where agents share a machine learning models, the quality of those models can be distorted. All these distortions can take place without impact on the quality of data received by other agents.

Our third problem is designing market mechanisms for unilateral data sharing when agents have private types. The market is run by a social planner. We consider  Vickrey-Clarke-Groves (VCG) mechanisms \citep{vickrey1961counterspeculation}, which have been shown to uniquely implement social welfare at dominant strategies \citep{holmstrom1979groves}. However, VCG mechanisms are known to suffer from budget imbalance for the intermediary. We look for market mechanisms that have the following features: they implement approximate social-welfare-maximizing outcomes at dominant strategies, are individually rational, and are budget balanced. To this end, we propose a variation of the VCG mechanism which we refer to as \emph{mixed-VCG}. Mixed-VCG uses both data distortion, referred to as \emph{data money}, and monetary payment, to achieve budget balance while implementing social welfare via calibration to the value of budget imbalance of the standard VCG mechanism. Mixed-VCG uses downward distortion for budget surplus and upward distortion for budget deficit of the intermediary. While distorting data quality downward is practically at zero cost, for example by adding noise to the data, an upward distortion may be costly to the intermediary. To address this issue, we present a mechanism that we refer to as the \emph{distorted-mixed-VCG mechanism}. Here the intermediary adds a fixed base downward distortion of data in a zero cost way, for example fixed noise added to data. Then instead of adding new data, the intermediary simply lifts controlled amounts of the fixed noise.

While our presentation focuses on the essential components of a data sharing market, we note that our model can be extended to more complicated algorithms for data sharing. For example a popular data sharing mechanism is one in which each agent only asks for outcome of an inquiry over another agent's data and the second agent incurs a privacy cost per inquiry. We will also show that our model and results can be extended to capture this setup.

The rest of the paper is organized as follows. Section \ref{sec: lit} reviews the relevant background literature. In Section \ref{sec: bidirectional sharing}, we study our first problem for existence of stable outcomes for bilateral sharing. In Section \ref{sec: comp}, we show the existence of competitive prices for unilateral sharing, and Section \ref{sec: mech} proposes the \emph{mixed-VCG mechanism} for unilateral sharing with private information. In Section \ref{sec: dif priv} we extend our model and results to the case of data sharing limited to incremental inquiries over data under a restriction expressed in terms of differential privacy. We conclude the paper and discuss further research directions in Section \ref{sec: conclusion}.
 
\section{Related Work} \label{sec: lit}

In this section we review some of the relevant literature, focusing on four main areas: an emerging literature on data markets, and three classical areas in microeconomics, specifically stable outcomes in cooperative games, competitive equilibrium, and social-welfare-maximizing mechanism design. We discuss these areas separately below.

\underline{Data markets and relevant applications:} There is a growing recent literature for data markets that studies specific properties of data as a economic good [see, e.g., for a survey \cite{mehta2019sell}]. The literature addresses market structures and mechanisms for sales of data and data products (e.g., ratings, predictions, recommendations, and machine learning models). 

To the best of our knowledge, our paper is the first to study non-monetary mechanisms for the trade of data (bilateral data exchange). There has been, however, a growing literature on monetary data market mechanisms which we review here. The first set of papers in this area study data sharing among strategic agents/firms. These papers implement information sharing among competing agents via the use of an intermediary. In this literature the intermediary is treated as a welfare maximizer, and is limited to aggregating data from agents and sharing it with all others~\citep[see, e.g.,][]{raith1996general}. \cite{bergemann2013robust} and \cite{bergemann2018design} extend this literature to the case of a rent-seeking intermediary who can also distort data.  In contrast to this literature, we do not consider competition among the agents in downstream markets, and we also do not pose corresponding restricting assumptions on trade and on utility functions. We allow for bilateral trade, both with and without intermediaries. In our study, the intermediary is a social-welfare maximizer and we also allow the intermediary to add noise to the model.

The other relevant strand of the data-market literature are two-sided markets where buyers and sellers are separate. \cite{agarwal2019marketplace} propose a market mechanism for finding prices and distributing welfare while allowing individual prices per buyer.  The prices are not allowed to be seller specific. They also not consider the cost of sharing, including privacy costs. Our work extends this model to buyer-seller specific prices in a setup where buyers can also be sellers. We also consider the cost of sharing, as do \cite{ghosh2011selling} and \cite{ligett2012take} who study the tradeoff with revenue gain.  \cite{acemoglu2019too} discuss the externalities in data privacy. 

There are several applications similar to data sharing. These can be unified under \emph{information economics}. Patents have similar features to data  \citep{lerner2004efficient}. They can be sold to multiple agents, can be combined together for higher productivity, and can be ordered in their value for consumers. However, the literature on patents assumes a negligible cost of sharing. Moreover patent quality can not be distorted by the intermediary. Ad auctions \citep{varian2009online} and online matching \citep{mehta2013online} are other relevant applications where intermediaries sell data about customers or the other side of the market. In these markets the sales of data is limited, a single ad shown to a target customer, data for each customer is sold to a single advertiser, and there is a one-to-one matching in online matching.

\underline{Stable outcomes in cooperative games:} There is an extensive literature of cooperative matching games dating back to the seminal work of \citep{gale1962college}. The major question is to existence of stable outcomes and efficient (polynomial order) algorithms for finding them. The games studied differ in various ways---in the structure of the matching market (e.g., two-sided vs one-sided, one-to-one vs one-to many vs many-to-many, etc.), and in the particular notion of stability employed (strong stability, core stability, Nash stability, individual stability, contractual individual stability, etc). \citep{gale1962college} studied two-sided games with two-sided preferences and established the existence of a core stable outcome via a deferred acceptance algorithm that runs in time $O(n^2)$. Two-sided matching games with one-sided preferences and with ownership is studied in \cite{shapley1974cores} where the top-trading cycle algorithm demonstrates the existence of a core stable outcome calculated in time $O(n^2)$.  Finally, two-sided games without ownership are studied in \cite{hylland1979efficient}, who studied a class of ``random serial dictatorship'' algorithms, and in \cite{abdulkadirouglu1998random}, who proposed a ``core from random endowment mechanism.''  While the majority of this work involves two-sided matching, \cite{irving1985efficient} studied core stable outcomes in one-sided team formation problems in the context of roommate matching. This is a special case of hedonic games where agents form disjoint partitions and have preferences only in terms of agents in their own coalition \citep{bogomolnaia2002stability}. Hedonic games do not have core stable outcomes in general.
\cite{bogomolnaia2002stability} show that hedonic game with ordered coalitions or when players have additive and separable preferences have individually stable coalitions.

Our game is more general than hedonic games since matchings are not necessarily in coalition partitions; it could be that agents A and B share data together but C only shares data with A and not B. Therefore our game is a \emph{network/graph formation game} \citep{jackson1996strategic}. 

Another difference between our work and existing literature is that the latter focuses on core stable outcomes, while focus on strong stability.  We will henceforth refer to strong stability as ``stability.'' Recall that the outcome of the game is strongly stable if there is no deviating coalition of agents that can form contracts among each other, also keeping or dropping their existing contracts with agents out of the deviating coalition at will, in a way such that all members of the deviating coalition weakly prefer the new outcome and at least one of them strongly prefers the new outcome. 

Most of the literature is based on two assumptions that are used to establish the existence of strong stability  \citep{ostrovsky2008stability}: substitutability and acyclicity of trade. We relax  both assumptions in this study.  

Games that weaken the acyclicity assumption are studied by \cite{jackson1996strategic} in the context of friendship networks.  The results, however, are limited to a weak notion of pair-wise stability and to restricted value allocation for the outcomes. \cite{jackson2005strongly} studied strong stability in network formation games but imposed strong assumptions on the structure of the value function; for example, component-wise separability. In contrast we consider strong stability and general preferences without value functions. 

A line of work that relaxes complementarity is presented for large markets by \cite{azevedo2018existence} but it relies on the existence of transferable (monetary payments) between agents. \cite{pycia2012stability} study the existence of stability in two-sided matching markets but assume that preferences are pair-wise aligned in two sides. In this paper, we study a network formation game involving finite number of agents with  general preferences (without utility functions). We also consider a \emph{limited complementarity property} which is satisfied (for example) under the substitutability condition of \cite{ostrovsky2008stability} but also holds more broadly.

In our work the limited complementarity property arises from the \emph{top agent property}, which induces an ordering among the agents on how much they are preferred by others. Our notion of top agent property has similarities to that of ``ordered coalitions'' in \cite{bogomolnaia2002stability} as well as the notion of the ``common ranking property''  \citep{farrell1988partnerships,caskurlu2019hedonic} and the ``top coalition property''~\citep{banerjee2001core} used for establishing the existence of core stable outcomes in Hedonic games. 

Finally while it is NP hard to find core stable outcomes in hedonic games under the assumptions that are generally made in the literature, in our setup we we are able to derive an algorithm that finds stable outcomes in time $O(N^2)$, where $N$ is the number of agents.

\underline{Competitive equilibrium:} There is an extensive literature on competitive prices. The main questions are existence of competitive prices, their structure, mechanisms that can find/form those prices, and their properties (including fairness, stability, and efficiency).  \cite{shapley1971assignment} studied two-sided matching games in which agents can have monetary transfer and show the existence of competitive prices. They also show that these prices realize core stable outcomes, and are social-welfare maximizing. The extension to the case of non-quasilinear utility is studied in \cite{demange1985strategy} and \cite{alaei2016competitive}.  \cite{sotomayor1992multiple} provide an extension to multi-partner two-sided matching markets case where each worker (firm) can have multiple but a limited number of jobs (hiring positions). Further study of many-to-one matching and many-to-many matching in two-sided matching markets is provided in \cite{gul1999walrasian} and \cite{ausubel2002ascending}. Our data sharing market is not two-sided and allows multiple partners without limitation on their number. we also consider quasilinear utilities. However, we only allow one unit of an indivisible good to be traded in each directed partnership.  Considering an additive structure on the utility function, we establish the existence of competitive prices for this setup, and we show that every social-welfare-maximizing outcome can be implemented with competitive prices such that it will be a stable outcome.

\underline{Social-welfare-maximizing mechanism design:} There is large literature on mechanism design where agents have private information. The VCG mechanism, presented in the seminal paper of \cite{vickrey1961counterspeculation}, is the only truthful dominant mechanisms that maximizes the utilitarian social welfare under certain assumptions, including unrestricted preferences and at least three different outcomes \citep{holmstrom1979groves}. VCG mechanisms are also ex-post individually rational under monotonicity of the choice set and non-negativity of externalities. However, VCG mechanisms have major practical shortcomings. In particular, they are not individually rational and budget balanced at the same time. 

Green-Laffont's impossibility result shows that no dominant-strategy incentive-compatible mechanism is always both social-welfare maximizing and weakly budget balanced \citep{green1977characterization}. We extend the VCG mechanism to address these issues by leveraging the structure of the data sharing economy. We consider unique features of data, particularly the possibility of distort the quality of data for an agent at zero cost and without impact on other agents' utilities. We then introduce a ``mixed-VCG'' framework that achieves approximately optimal social welfare, dominant strategy incentive compatibility, budget balance, and ex-post individual rationality. 

\section{Bidirectional Data Sharing: The Network Formation Game}\label{sec: bidirectional sharing}

We begin by presenting our model of the bidirectional data sharing problem, posing it as as a network formation game. Then we present our results on the existence of strongly stable outcomes for this game.

\subsection{Model for bidirectional data sharing}\label{sec: model}

Consider $N$ agents (e.g., banks or hospitals). Each agent generates value from data available to them (e.g., by training predictive models for fraud or cancer detection). Agent $n$ has a local data batch $D_n$ of size $d_n=|D_n|$. 
% \begin{assumption}\label{ass: private data}
% Data $D_n$ is agent $n$'s private knowledge.
% \end{assumption}
Agents can also share data with each other, but sharing comes at some cost (e.g., a privacy cost or communication cost).
\begin{assumption}\label{ass: full sharing}
Upon sharing, data is reported fully and truthfully.
\end{assumption}
The truthful report of data can be justified practically by the existence of an auditing third party. Furthermore, data sharing can be unidirectional (data in exchange of money) or bidirectional (data in exchange of data). 
\begin{assumption}(Bidirectional Sharing) Once agents $n$ and $m$ share data then both gain access to the others' data.
\end{assumption}
Bidirectional data sharing is in the cases where it is hard to attach a monetary value to data (e.g. for banks rare fraud data) and hence data is exchanged with data, or the case that regulations prevent sales of data for money (e.g. patients data for banks).

We model the data sharing game as a \emph{Network Formation Game}, based with a graph $G$ that is defined as follows.
\begin{definition}[Sharing Graph]\label{def: sharing graph}
In the sharing graph $G$ each node is an agent and each edge represents the sharing of data between two nodes. For bidirectional sharing, an edge between nodes $n$ and $m$ means they share data in both directions (we exclude edges from a node to itself). 
\end{definition}

If agent $n$ weakly prefers outcome graph $G_1$ to $G_2$ we denote this preference by $G_1\succcurlyeq_i G_2$. Strong preference is denoted by $G_1\succ_i G_2$.

Our work is based on the following notion of stability.
\begin{definition}[Stable Outcome] A graph $G$ is stable if no coalition of agents can form a new subgraph while retaining some of their existing edges, such that all members of the coalition weakly prefer the new graph to the original graph, and at least one of them strongly prefers it. %%A graph is Nash stable if no agent $n$ can change $S_n$ to increase his utility.
\end{definition}

Next, we provide structures on agents' preferences. Denote by $S_n$ the set of nodes connected to $n$ including $n$ itself.

\begin{assumption}[Preference Structure]\label{ass: pref structure 1} Agent $n$'s preferences over outcome graphs in $\mathcal{G}$ are a function solely of $S_n$. 
\end{assumption}

\begin{remark}
The structure in Assumption \ref{ass: pref structure 1} does not allow for the modeling competition of agents in downstream markets, because each agent's preference is independent of what data is available to himself and not the other agents.
\end{remark}

The preference structure in Assumption \ref{ass: pref structure 1} is a common assumption in the literature on network formation games and hedonic games.

The graph formation game above is similar to that of \cite{jackson1996strategic} and \cite{jackson2005strongly}, with the distinction that in our case agents do not have a utility function, which makes the game more general.

\subsection{Existence of stable outcomes}
The network formation game does not have a stable outcome in general. We present the \emph{top agent} and \emph{limited complimentarity} assumptions and prove under these assumptions, stable outcomes always exist. We also show by example that relaxing any of these assumptions leads to no stable outcome.

\begin{definition}[Total Preference Ordering]\label{def: comp order} We say agent $k$ prefers $i$ to $j$, denoted by $i \succ_k j $, if for any subset of agents $S_k$ with $\{i, j\}\cap S_k=\emptyset$, $S_k\cup \{i\} \succ_k S_k\cup \{j\}$. Agent $k$ has a total preference ordering if it has preference over every other two agents and furthermore this preference is complete, transitive, and reflexive.
\end{definition}

\begin{definition}[Top Agent Property]\label{def: top agent}
Agents preferences satisfy the top agent property if all agents have a total preference ordering, and furthermore their preferences over other agents are the same.
\end{definition} 

Under the top agent property, the agents can be ranked from the top choice agent to the lowest preferred agent for data sharing.

\begin{corollary}[Homogeneous Preference Structure]\label{cor: utility structure 2}
In data sharing markets, if each agent $n$'s preference is only a function of the total size of shared data that includes him, $\sum_{k\in S_n} d_k$, and agents' data sizes are orderable, $d_1>d_2>...>d_N$, then the top agent property holds.
\end{corollary}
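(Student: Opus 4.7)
The plan is to unpack the definition of the top agent property and verify its two components in turn: (i) each agent has a total preference ordering over the other agents, and (ii) this ordering is the same across all agents. Throughout I will use the natural data-sharing convention that ``more data is better,'' i.e., the function mapping $\sum_{k\in S_n} d_k$ to agent $n$'s preference level is strictly increasing; this monotonicity is the essential content that lets the dependence-only-on-total-size hypothesis translate into a ranking of individual agents.

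First, fix an arbitrary agent $k$ and let $f_k$ be the function such that $k$'s preference over outcome graphs depends only on the scalar $f_k\bigl(\sum_{m\in S_k} d_m\bigr)$, with $f_k$ strictly increasing. Take any two distinct agents $i,j$ and any subset $S_k$ with $\{i,j\}\cap S_k=\emptyset$, and set $A:=\sum_{m\in S_k} d_m$. Then the two candidate neighbourhoods $S_k\cup\{i\}$ and $S_k\cup\{j\}$ give preference levels $f_k(A+d_i)$ and $f_k(A+d_j)$ respectively. Because $f_k$ is strictly increasing, $d_i>d_j$ yields $S_k\cup\{i\}\succ_k S_k\cup\{j\}$ uniformly in $S_k$, matching Definition \ref{def: comp order}. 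Hence the binary relation $i\succ_k j\iff d_i>d_j$ is well defined on agents, and since by hypothesis $d_1>d_2>\cdots>d_N$ is a strict total order on $\{1,\dots,N\}$, the induced preference ordering of $k$ is complete, transitive, and reflexive.

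Second, observe that the ordering constructed above is induced entirely from the exogenous ranking $d_1>d_2>\cdots>d_N$, without reference to the identity of $k$. Therefore every agent produces the same total preference ordering on the other agents, which is exactly the second requirement of the top agent property (Definition \ref{def: top agent}).

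The proof involves no real obstacle beyond this routine verification; the only conceptually delicate point is making the monotonicity of $f_k$ explicit, since the bare statement that preferences are ``a function of'' $\sum_{k\in S_n} d_k$ only partitions outcomes into equivalence classes, whereas strict monotonicity is required to rank agents by their individual $d_i$ values.
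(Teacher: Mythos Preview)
The paper states this corollary without proof, so there is no ``paper's own proof'' to compare against; your argument is correct and is essentially the natural one the authors presumably have in mind. Your explicit flagging of the strict-monotonicity assumption on $f_k$ is a useful clarification, since the corollary as phrased (``preference is only a function of $\sum_{k\in S_n} d_k$'') technically omits it, yet without it the data-size ordering $d_1>\cdots>d_N$ cannot be lifted to a preference ordering over agents.
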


\begin{definition}[Limited Complementarity] \label{def: limited complementrity} Agents have limited complementarity if for every $k, i$ such that $S_k\succ_k S_k\cup \{i\}$ then $S_k \succ_k S_k\cup S'$ for all $S'\subset \{i, i+1, ...., N\}$.
\end{definition}

Limited complementarity argues that if adding an agent $i$ to the set $S_k$ decreases the preference for agent $k$, then adding any subset of agents with less than or equal preference to $k$ will also decrease the preference for agent $k$; in other words, the complementarity for agent $k$ across any such subset is limited from above.

We are now ready to establish the existence of stable outcomes for network formation under the top agent and limited complementarity assumptions.

\begin{proposition}\label{prop: core} Under top agent assumption in Definition \ref{def: comp order} and limited complementarity assumption in Definition \ref{def: limited complementrity}, there exists a stable outcome that can be found in time $O(N^2)$ for the network formation game.
\end{proposition}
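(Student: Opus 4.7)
The plan is to describe an \emph{ordered match} algorithm and then verify its runtime and strong stability. Relabel agents $1,\ldots,N$ so that the common preference ranking given by the top agent property is $1\succ 2\succ\cdots\succ N$; the algorithm processes phases $k=1,2,\ldots,N$ in this order. In phase $k$, with the current set $S_k$ (finalized in earlier phases) held fixed, iterate $j=k+1,k+2,\ldots,N$: if both $S_k\cup\{j\}\succ_k S_k$ and $S_j\cup\{k\}\succ_j S_j$, insert edge $(k,j)$ and update both sets; if $S_k\succ_k S_k\cup\{j\}$, break the inner loop; otherwise skip $j$ and continue. Each ordered pair $(k,j)$ with $k<j$ is inspected at most once, which gives the $O(N^2)$ bound on preference queries.

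The break step is validated by limited complementarity: $S_k\succ_k S_k\cup\{j\}$ implies $S_k\succ_k S_k\cup S'$ for every $S'\subseteq\{j,j+1,\ldots,N\}$, so no further lower-ranked addition can help $k$. Combining this with the total preference ordering, I would first prove an auxiliary monotonicity lemma: whenever $S\cup\{j\}\succ_k S$, one also has $S\cup\{i\}\succ_k S$ for every $i$ ranked above $j$. Thus the agents whose addition to a given $S$ helps $k$ form an upward-closed set in the common ranking, so phase $k$ truly assembles $k$'s most preferred superset of its starting set among willing partners in $\{k+1,\ldots,N\}$.

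For strong stability, suppose for contradiction that the output $G^*$ admits a blocking coalition $C$ with alternative $G'$, and let $k^{*}=\min C$. Because $C\setminus\{k^{*}\}\subseteq\{k^{*}+1,\ldots,N\}$, agent $k^{*}$ cannot acquire a new neighbor in $\{1,\ldots,k^{*}-1\}$; the permitted modifications are dropping current neighbors or adding new ones from $C\cap\{k^{*}+1,\ldots,N\}$. Purely additive deviations can be dispatched directly: an added agent from beyond the phase-$k^{*}$ break point violates limited complementarity on $k^{*}$'s side, while an added pre-break refuser $c$ requires $c$'s consent in $G'$. Since $c$ refused $k^{*}$ at some moment in phase $k^{*}$ with set $S_c^{(k^{*}-1)}$, limited complementarity applied to $c$ shows that $S_c^{(k^{*}-1)}$ is preferred by $c$ to every superset drawn from $\{k^{*},\ldots,N\}$, which clashes with $c$'s required weak preference for~$G'$.

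The main obstacle is eliminating genuine swap deviations, in which $k^{*}$ drops a current neighbor and replaces it by a new one from $C$. I would tackle this by induction on $\min C$: having already ruled out blocking coalitions with smaller minimum, every agent in $\{1,\ldots,k^{*}-1\}$ must keep its $G^*$ neighborhood in $G'$, and in particular every neighbor of $k^{*}$ in $\{1,\ldots,k^{*}-1\}$ is retained (such an edge can only be dropped by mutual consent of its endpoints, but the other endpoint is outside $C$). The top agent property then prevents $k^{*}$ from profitably swapping a retained higher-ranked neighbor for a lower-ranked one, leaving only modifications confined to $\{k^{*}+1,\ldots,N\}$, which reduce to the additive case handled above. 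Chaining this argument through the remaining members of $C$ contradicts the strict preference of at least one agent in $C$. The bookkeeping in this final chaining step is what I expect to be the main subtlety, since the algorithm realizes $k^{*}$'s optimum only among \emph{willing} partners, and bridging this gap requires careful joint use of the top agent property and limited complementarity.
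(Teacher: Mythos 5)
Your algorithm is essentially the paper's ordered match and the $O(N^2)$ count is fine, but the stability argument has genuine gaps, the central one being the case you yourself flag as unresolved: deviations that mix additions with deletions. The induction on $\min C$ does not close it --- ruling out blocking coalitions with a smaller minimum says nothing about the structure of the particular deviation graph $G'$ of the coalition at hand, and your claim that $k^*$'s edges to agents outside $C$ must be retained is not valid under the stability notion used here, since a deviating member may keep or drop existing edges to non-members at will (no consent of the outside endpoint is needed). The paper closes exactly this gap by a different route: it first proves that every agent weakly prefers the ordered-match outcome to any graph obtained by deleting some of its edges (Lemma~\ref{lemma: ord match remove edge}, via an exchange argument that repeatedly trades a removed node for the least valuable remaining one using the common total order), then locates the lexicographically first \emph{new} edge $i'j'$ in $G'$ and shows, through a replacement algorithm that combines this lemma, the total ordering, and limited complementarity (Corollary~\ref{coro: augmenting}), that whichever endpoint blocked this edge during the ordered match strictly prefers $G$ to $G'$, no matter what mixture of added and dropped edges the deviation contains. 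Your "pre-break refuser" step has the same hole in miniature: $c$'s refusal only gives $S_c\succ_c S_c\cup S'$ for supersets of $c$'s refusal-time set, but in $G'$ agent $c$ may also have dropped neighbors, so its deviation set need not be such a superset; concluding that $c$ is strictly worse off again requires the replacement comparison, not limited complementarity alone.

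Two further points. First, your matching step uses strict preferences, whereas the paper's conditions \eqref{eq: cond 1 join}--\eqref{eq: cond 2 join} are weak: if a proposer is indifferent to a new edge while the prospective partner strictly gains (or vice versa), your algorithm skips the edge, and that pair is then a blocking coalition (one weakly better, one strictly better), so the strict-preference variant can fail strong stability outright. Second, the assertion that phase $k$ delivers $k$'s most preferred superset among willing partners is stated but not proved; what the argument actually needs is the deletion-monotonicity property of Lemma~\ref{lemma: ord match remove edge}, and that lemma itself requires the exchange argument sketched above rather than the greedy-optimality claim.
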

\begin{proof}
We prove by constructing an algorithm. Since total preference ordering exists, without loss of generality we assume agents are sorted in their value for others, with $1$ the highest preferred and $N$ the least preferred. Consider the following algorithm which we refer to by the \emph{ordered match}. Initially set $G$ to be an empty edge graph. Agent $1$ moves first (set $n=1$), swipes $\ell$ from $2$ to $N$ and proposes to each agent $\ell$ to share data together if
\begin{align}
    S_n\cup\{\ell\}\succcurlyeq_n S_n\label{eq: cond 1 join},
\end{align} 
Agent $\ell$ accepts if 
\begin{align}
    S_\ell\cup\{n\}\succcurlyeq_\ell S_\ell.\label{eq: cond 2 join}
\end{align}
If $\ell$ accepts then an edge between $n=1$ and $\ell$ is added to $G$. $S_n$ and $S_\ell$ are also updated correspondingly. Then $1$ moves to $\ell+1$. After $1$ finishes his swipe, agent $2$ does the same process (set $n=2$) but passing through $3$ to $N$ and so on. We denote the outcome graph of the ordered algorithm run over the set of agents $S$ by $Ord(S)$.

To show the outcomes of the ordered match are stable, we use the following lemma and corollary.

\begin{lemma}\label{lemma: ord match remove edge} All agents weakly prefer the ordered match outcome graph to the alternative graph formed from removing some of its edges.
\end{lemma}
\begin{proof}
Denote the outcome of the ordered match by $G$. Consider removing some edges from $G$ resulting in graph $G'$. Denote the set obtained from removing exactly $i$ of the least valuable nodes from $S_m$ by $S_m^{-i}$.  It is obvious from the ordered match algorithm that $S_m\succcurlyeq_m S_m^{-i}$. Consider a node $m$ with the set of removed connections $R'_m:=S_m\backslash S'_m$, where $|R'_m|$ is the size of $R'_m$. To complete the proof, we show that $S'_m\succcurlyeq_m S_m^{-|R'_m|}$. To show this, we construct an algorithm that starts from $S'_m$ and converges to $S_m^{-|R'_m|}$ by replacing the nodes such that in each step the preference of agent $m$ is non-decreasing. The replacement algorithm is in two steps which are repeated for $|R'_m|$ times: (a) first extend the set $S'_m$ by adding the least valuable node in $R'_m$ to it to form a new set denoted by $S'_{m, ext}$, (b) remove the least valuable node in $S_m$ from $S'_{m,ext}$ to form $S''_m$. Note that this least valuable node of $S_m$ always exists in $S'_{m,ext}$, because either it has been a member of $S'_m$ from the beginning, or otherwise it is added in step $(a)$. By the total preference ordering assumption, Definition \ref{def: comp order}, we have $S''_m\succcurlyeq_m S'_m$. We next run steps (a) and (b) on $S''_m$ by adding the second least valuable node in $R_m$ to $S''_m$ and removing the second least valuable node of $S_m$ from $S''_m$. Again agent $m$ has non-decreasing preferences over the resulting set. Continuing this procedure $|R_m|$ times forms set $S_m^{-|R_m|}$.
\end{proof}

\begin{corollary}\label{coro: augmenting} If agent $i$ does not offer to agent $j$ in ordered match algorithm---i.e., (\ref{eq: cond 1 join}) does not hold---then adding any subset of agents in $\{j,j+1,...,N\}$ to $S_i$ strongly decreases $i$'s utility.
\end{corollary}
This corollary is a direct consequence of the limited complementarity assumption in Definition \ref{def: limited complementrity}.

We now complete the proof of the theorem. Consider $G=Ord(S)$. Consider a deviating coalition $S'=\{1',2',...,N'\}$ of agents with agents ordered in their value with $1'$ being the highest valued agent. Denote their deviating sharing graph, which consists only of nodes $S'=\{1',2',...,N'\}$, by $G'$. Note that if $G'\subset G$ then no agent is strongly better with the deviation due to lemma \ref{lemma: ord match remove edge}, so $G'\not\subset G$. Next, search for a new edge in $G'$ compared to $G$ using the following \emph{ordered search} algorithm: start from $1'$ and swipe over $2', 3',...$ to the first edge in the coalition graph $G'$ that did not exist in the ordered match graph $G$. If no such edge was found, start with $2'$ and swipe over $3'$, $4'$ etc, and so on, to the point that the first new edge $i'$ to $j'$ is found. Without loss of generality assume $i'$ is weakly more valuable than $j'$, so $i'<j'$. Since the link between $i'$ and $j'$ does not exist in $G$, at least one of \eqref{eq: cond 1 join} or \eqref{eq: cond 2 join} did not hold when $i'$ was swiping $j'$ in the ordered match algorithm. We show in the first case $i'$ strongly prefers $G$ to $G'$ and in the second case $j'$ strongly prefers $G$ to $G'$.

\begin{enumerate}
\item In the first case that \eqref{eq: cond 1 join} does not hold, we start from $S'_{i'}$ and construct $S_{i'}$ through first a replacement algorithm forming $S^{rep}_{i'}$ and then a replacement algorithm forming $S_{i'}$ from $S^{rep}_{i'}$. Both of these stages are run through steps that only weakly increase the $i'$ preference. By considering different cases we show that at least one of them strongly increases the $i'$ preference. We use the following corollary for our proof here.

\begin{corollary}\label{cor: case 1 replacement} $i'$ strongly prefers all nodes in $S_{i'}\backslash (S_{i'}\cap S'_{i'})$ to $j'$. Also $i'$ weakly prefers $j'$ to all nodes in $S'_{i'}\backslash (S_{i'}\cap S'_{i})$.
\end{corollary} 
The corollary holds since in the case where \eqref{eq: cond 1 join} does not hold, $i'$ is not connected to $j'$ in $G$ or any node less valuable than $j'$. Therefore $i'$ strongly prefers all nodes in $S_{i'}$, and consequently all nodes in $S_{i'}\backslash (S_{i'}\cap S'_{i'})$, to $j'$. On the other hand, since $i'j'$ was the first edge found in the ordered search algorithm, $j'$ is weakly preferred by $i'$ to all nodes in $S_{i'}\backslash (S_{i'}\cap S_{i})$.

The replacement algorithm forms $S^{rep}_{i'}$ as follows:
\begin{itemize}
\item For a number of repetitions equal to $\min\{|S_{i'}\backslash (S'_{i'}\cap S_{i'})|, |S'_{i'}\backslash (S'_{i'}\cap S_{i'})|\}$ do
\begin{enumerate}
    \item  Consider set $S'_{i'}$ and replace an arbitrary node in $S'_{i'}\backslash (S_{i'}\cap S'_{i'})$ by an arbitrary node in $S_{i'}\backslash (S_{i'}\cap S'_{i'})$ to form $S''_{i'}$.
    \item  Replace $S'_{i'}$ with $S''_{i'}$ and repeat the step (a).
\end{enumerate} 
\end{itemize}
We prove that $S_{i'}\succcurlyeq_{i'} S^{rep}_{i'}$. To this end, consider two cases. In case one, $|S_{i'}\backslash (S_{i'}\cap S_{i})|\ge |S'_{i'}\backslash (S'_{i'}\cap S_{i})|$ which means $S^{rep}_{i'}\subset S_{i'}$ and consequently from lemma \ref{lemma: ord match remove edge}, $S_{i'}\succcurlyeq_{i'} S^{rep}_{i'}$. In case two, $|S_{i'}\backslash (S'_{i'}\cap S_{i'})|< |S'_{i'}\backslash (S'_{i'}\cap S_{i'})|$ which means $S_{i'}\subset S^{rep}_{i'}$ in a way that $i'$ prefers all nodes in $S^{rep}_{i'}\S_{i'}$ less than or equal to $j'$. Consequently from Corollary \ref{coro: augmenting}, $S_{i'}\succ_{i'} S^{rep}_{i'}$. Next, in case where $\min\{|S_{i'}\backslash (S'_{i'}\cap S_{i'})|, |S'_{i'}\backslash (S'_{i'}\cap S_{i'})|\}>0$,  due to Corollary \ref{cor: case 1 replacement}, $S^{rep}_{i'}\succ_{i'} S'_{i'}$ and therefore $S_{i'}\succ_{i'} S'_{i'}$. Finally if $\min\{|S_{i'}\backslash (S'_{i'}\cap S_{i'})|, |S'_{i'}\backslash (S'_{i'}\cap S_{i'})|\}=0$, considering that $j'\in |S'_{i'}\backslash (S'_{i'}\cap S_{i'})|$, it should be that $|S_{i'}\backslash (S'_{i'}\cap S_{i'})|=0$ which means $S_{i'}\in S'_{i'}$ in a way that $S'_{i'}\backslash S'_{i'}\cap S_{i'}$ includes nodes that are preferred by $i'$ less than or equal to $j'$. Furthermore in this case $S_{i'}^{rep}=S'_i$. Therefore, from Corollary \ref{coro: augmenting}, $S_{i'}\succ_{i'} S'_{i'}$.

\item In the second case that \eqref{eq: cond 2 join} does not hold, we start from $S'_{j'}$ and construct $S_{j'}$ through the same replacement algorithm as in the first case such that the preference of $j'$ is strongly increased. We skip repetition of the details for this case.
\end{enumerate}

The ordered match algorithm finishes in time $O(N^2)$ since each of the $N$ agents swipes on average over $N/2$ other agents.
\end{proof}

\begin{remark} The ordered match results in a unique outcome.
\end{remark}

\begin{remark} The outcome of the ordered match algorithm is stable and therefore Pareto optimal. However it is not necessarily social-welfare maximizing. Consider simple case that there are two agents $1$ and $2$ with the following utility functions: $u_1(\{1\})=1, u_1(\{1,2\})=0$, $u_2(\{1,2\})=10, u_2(\{2\})=0$. Under the ordered match agent $1$ does not share with agent $2$ since his utility by sharing will decrease. However, the social-welfare-maximizing outcome is for agents $1$ and $2$ to share data.
\end{remark}

Proposition \ref{prop: core} does not hold when either top agent or limited complementarity assumption is relaxed. To see this consider the following examples. 
\begin{example}[No stable outcome without top agent property]\label{ex: limited complementarity counterexample} Consider three agents. Agent $1$ has preferences over $S_1$ defined by $\{1,3\}\succ_{1} \{1,2\}\succ_{1} \{1\} \succ_{1} \{1,2,3\}$; agent $2$ has preferences over $S_2$ as $\{1,2,3\}\succ_2 \{2\} \succ_2 \{2,3\}$; agent 3 has preferences over $S_3$ as $\{1,2,3\}\succ_3 \{3,2\}\succ_3 \{3\} \succ_3 \{3,1\}$. Here the top agent property does not hold. The game does not have a stable outcome.
\end{example}

%%\begin{example}[No stable outcome without limited complimentarity] 
%%\textcolor{red}{to be added}
%%\end{example}

%\begin{proposition}[Size of the Stable Outcomes] \mr{It is straight forward to see the core stable outcomes are not unique in general. Their full sub-space should be characterized potentially by extending the proposed algorithm for finding the social welfare maximizing outcome.}
%\end{proposition}

\section{Unidirectional Sharing: Competitive Prices}\label{sec: comp}

In this Section, we study the case where data is traded with money. To this end we first extend the model in Section \ref{sec: bidirectional sharing} to include utility functions and allow money transfer. Following \cite{shapley1974cores}, we investigate existence of competitive prices that can realize stable outcomes. We will present structures on the utility function that guarantee existence of such prices.

\subsection{Model}
We update the model used in Section \ref{sec: bidirectional sharing}. We assume the agents' preferences from each sharing graph is measurable with a utility function. We consider the following structures for this utility function. 

The sharing graph for the unidirectional sharing will be a directed graph $G$, where a directed edge from $i$ to $j$ indicates agent $i$ sharing his data with agent $j$.

\begin{assumption}[Utility structure]\label{def: util struc}
Agents' preferences over outcomes $G$ can be represented with a total utility function. Agent $i$'s total utility function is $V_i: \mathcal{G}\rightarrow {R}$. Moreover, $V_i$ is decomposed as a utility function $U_i(S_i^I)$ minus a cost function $C_i(S_i^O)$:
\begin{align}
    V_i(G):=U_i(S_i^I)-C_i(S_i^O),
\end{align}
where $S_i^I$ is the set of agents that agent $i$ gains access to their data, and $S_i^O$ is the set of agents that gain access to agent $i$'s data.
\end{assumption}

We now leverage the particular features of data for machine learning algorithms to put structure on the utility functions and the cost function.

%%\textcolor{red}{do we need this? or just the decomposable assumption?}
%%\begin{assumption}[Cost Structure]\label{ass: cost structure size}
%%Agent $n$'s cost depends on number of agents who access his data
%%\begin{align}\label{eq: cost structure size}
  %%  C(S^O_n, \theta_n)=\hat{C}(|S^O_n|, \theta_n),
    %|S^O_n|\times c(\theta_n)
%%\end{align}
%%and furthermore $\hat{C}(|S^O_n|, \theta_n)$ is increasing and convex in $|S^O_n|$.
%\end{assumption}

\begin{assumption}[Additive Cost Structure]\label{ass: cost structure additive}
Agent $n$'s cost is additive with respect to the agents who access his data:
\begin{align}\label{eq: cost structure additive}
    C_i(S^O_i)=\sum_{j\in S^O_i} c_i(j).
\end{align}
\end{assumption}

\begin{remark}
The cost structure in \eqref{eq: cost structure additive} is additively decomposable. It can represent privacy cost as well as peer-to-peer communication costs. However it does not include one-time communication with an intermediary which can then send the data to the others. It also does not model differential privacy costs where there is a privacy cost per inquiry of agent $j$ with respect to agent $i$'s data. Extension of the model to incorporate differential privacy is considered in Section \ref{sec: dif priv}. 
\end{remark}

\begin{assumption}
Agents can have monetary transfer. We denote the payment by/charge to agent $i$ with $t_i$, and its quasilinear utility function by
\begin{align}\label{eq: total utility}
  V_i(G)-t_i:=U_i(S^I_i)-C_i(S^O_i)-t_i.
\end{align}
\end{assumption}

\subsection{Existence of competitive prices}
\begin{proposition}\label{prop: comp}
If costs are additively decomposable, see (\ref{eq: cost structure additive}) in Assumption \ref{ass: cost structure additive}, then for a unidirectional sharing graph (i) there exist competitive prices, and (ii) every social-welfare-maximizing outcome can be implemented with a competitive price, such that it is also a stable outcome.
\end{proposition}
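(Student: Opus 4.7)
The plan is to exhibit explicit personalized competitive prices $p_{ij}^{*}=c_i(j)$ for every ordered pair $(i,j)$, verify both claims at these prices, and then obtain stability through a decomposition-plus-upper-bound argument. The key structural step I would carry out first uses the additive-cost assumption to rewrite the global objective as
\begin{align*}
\sum_i V_i(G) \;=\; \sum_j \Bigl[\,U_j(S_j^I) - \sum_{i\in S_j^I} c_i(j)\,\Bigr],
\end{align*}
so the welfare maximization separates across buyers, and any welfare maximizer $G^{*}$ must satisfy, for every $j$, that $S_j^{I*}$ independently maximizes $U_j(S)-\sum_{i\in S}c_i(j)$ over $S\subseteq \{1,\ldots,N\}\setminus\{j\}$.

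Next I would show that the price vector $p_{ij}^{*}=c_i(j)$ supports $G^{*}$ as a competitive equilibrium. On the buyer side, agent $j$'s best-response problem $\max_S U_j(S)-\sum_{i\in S} p_{ij}^{*}$ is literally the decomposed welfare term above, so $S_j^{I*}$ is a best response. On the seller side, $i$ earns $p_{ij}^{*}-c_i(j)=0$ on any sale, so $i$ is indifferent across all subsets $S_i^{O}$; in particular, $S_i^{O*}$ is a best response. This gives (i) and the implementation half of (ii).

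For stability I would compute each agent's net equilibrium payoff. The per-sale payments received by $i$ exactly cancel $i$'s outgoing sharing costs, leaving
\begin{align*}
V_i(G^{*})-t_i^{*} \;=\; U_i(S_i^{I*}) - \sum_{j\in S_i^{I*}} c_j(i),
\end{align*}
which equals $i$'s unconstrained maximum $\max_{S\subseteq \{1,\ldots,N\}\setminus\{i\}}\bigl\{U_i(S)-\sum_{j\in S}c_j(i)\bigr\}$. For any coalition $C$ proposing a deviating subgraph $\tilde{G}$ with budget-balanced transfers, the same additive-cost bookkeeping yields a total coalition surplus of $\sum_{k\in C}\bigl[U_k(\tilde{S}_k^I)-\sum_{m\in \tilde{S}_k^I} c_m(k)\bigr]$, which is dominated term-by-term by $\sum_{k\in C}\bigl[U_k(S_k^{I*})-\sum_{m\in S_k^{I*}} c_m(k)\bigr]$, the sum of equilibrium payoffs over $C$. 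Hence no deviation can make every member of $C$ weakly better off with at least one strictly better off.

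The main obstacle I expect is pinning down the precise stability notion---whether deviating coalitions are permitted to retain their existing trades with outsiders, and at what prices. The argument should be robust to this choice because the upper bound $U_j(S)-\sum_{i\in S}c_i(j)\le U_j(S_j^{I*})-\sum_{i\in S_j^{I*}} c_i(j)$ holds for \emph{every} feasible $S$, so retained outsider edges cannot raise the coalition's total surplus above what $G^{*}$ already delivers. Settling this subtle definitional point carefully, and then invoking the decomposition, is the one delicate step; everything else is an accounting exercise driven by additivity of $C_i$.
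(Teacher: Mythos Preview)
Your proposal is correct and follows essentially the same route as the paper: set the price for $i$'s data to $j$ equal to $c_i(j)$, use additivity of costs to decompose total welfare into the per-buyer problems $\max_{S}\{U_j(S)-\sum_{i\in S}c_i(j)\}$, observe that sellers are indifferent so supply can meet any demand, and conclude stability from the fact that each agent already attains the unconstrained maximum of this expression. Your treatment of the coalition-deviation step is in fact more explicit than the paper's, which simply notes that every agent is at its maximum and hence no coalition can improve.
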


\begin{proof} We prove by construction. (i) Consider the set of prices where agent $i$ charges a price of $c_i(j)$ for selling his data to agent $j$. We show these prices are competitive (form a competitive equilibrium). The total payment of agent $i$ will be 
\begin{align}
    t_i=\sum_{j\in S^I_j} c_j(i)-\sum_{j\in S^O_j} c_i(j).
\end{align}
We determine the demand and supply of each agent $i$. Under Assumption \ref{ass: cost structure additive},
\begin{align}\label{eq: utility competitive prices}
    U_i(S^I_i)-C_i(S^O_i)-t_i&=U_i(S^I_i)-\sum_{j\in S^O_i} c_i(j)-\sum_{j\in S^I_j} c_j(i)+\sum_{j\in S^O_j} c_i(j)\\
    &=U_i(S^I_i)-\sum_{j\in S^I_j} c_j(i).
\end{align}
Agent $i$' demand, which is the optimal subset of agents to buy data from denoted by ${S^I_i}^*$, is calculated by 
\begin{align}\label{eq: comp indiv}
    {S^I_i}^*\in\arg\max_{S^I_i}U_i(S^I_i)-\sum_{j\in S^I_i} c_j(i).
\end{align}
Agent $i$ is indifferent on the amount of supply, which is the set of agents agent $i$ shares its data with; this is because (\ref{eq: utility competitive prices}) is independent of $S^O_i$. Consequently, for any demand set $S^I_j$, $j\in \{1,2,...,N\}$, the optimal supply of agent $i$ can be determined uniquely in a way to meet the demand and clear the market:
\begin{align}
    {S^O_i}^*=\{j: j\in \{1,2,...,N\}, i\in {S^I_j}^*\}.
\end{align}

(ii) We prove that every social-welfare-maximizing outcome can be implemented with the competitive prices we proposed above. Start with a social-welfare-maximizing outcome and decompose it as follows:
\begin{align}
    \max_{G\in \mathcal{G}} \sum_{i=1}^{N} U_i(S^I_i)-C_i(S^O_i)&= \sum_{i=1}^{N}\max_{S_i^I\subset \{1,2,...,N\}}\bigg( U_i(S^I_i)-\sum_{j\in S^I_i}c_j(i)\bigg)\label{eq: comp all}\\
    &= \sum_{i=1}^{N}\max_{S_i^I\subset \{1,2,...,N\}, S_i^O\subset \{1,2,...,N\}}\bigg( U_i(S^I_i)-C_i(S^O_i)-t_i\bigg),
    \label{eq: comp all}
\end{align}
where the decomposition in (\ref{eq: comp all}) matches that of (\ref{eq: comp indiv}). This shows if every agent optimizes (\ref{eq: comp indiv}) individually, then the solution will be social welfare maximizing. To see under the proposed competitive prices, the social welfare maximizing outcomes are stable, note that from (\ref{eq: comp indiv}) every agent is gaining its maximum utility at the outcome of the social-welfare-maximizing outcome  compared to any other coalition.
\end{proof}
\begin{remark}
The results in Proposition \ref{prop: comp} are based on two features in this problem which are specific to the data sharing economy: first, the sharing costs are decomposable as a sum of the sharing cost between each pair of individuals according to Eq. (\ref{eq: cost structure additive}) in Assumption \ref{ass: cost structure additive}. Second, data can be replicated freely so there is no constraint on number of agents who buy data from an agent. This last assumption makes it possible to define market-clearing supply amounts. 
\end{remark}

\begin{proposition}[Space of the Competitive Prices] Competitive prices of the unidirecitonal sharing game are not unique to those presented in Proposition \ref{prop: comp}. For example, fixing all other prices to be $c_i(j)$, $i\neq m$, agent $m$ can set its prices for agent $j$ to be any value in $[c_m(j), p^{\max}_m(j)]$ where $p^{\max}_m(j)$ is a price such that the solutions to the following optimization remains the same as that in (\ref{eq: comp indiv}) for all $i\in \{1,2,...,N\}$:
\begin{align}
    \arg\max_{S^I_i}U_i(S^I_i)-I(m\in S_n^I)p^{\max}_m(i)-\sum_{j\in S^I_n, j\neq m } c_j(i).
\end{align}
\end{proposition}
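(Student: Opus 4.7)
The plan is to verify that the modified price schedule---where agent $m$ charges $p_m(j) \in [c_m(j), p^{\max}_m(j)]$ for each buyer $j$ while every other seller $i \neq m$ keeps its price at $c_i(j)$---constitutes a competitive equilibrium, sustaining the same allocation $({S^I_i}^*, {S^O_i}^*)$ that was constructed in the proof of Proposition \ref{prop: comp}. This reduces to three checks: buyer optimality, seller optimality, and market clearing.

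First, I would handle the buyer side. Writing out each agent $i$'s post-transfer quasilinear payoff under the new prices, agent $i$'s own selling revenue and selling cost cancel for $i \neq m$ (which still charges at cost), and for $i = m$ the seller terms depend only on $S^O_m$ and collapse to a nonnegative margin $\sum_{j \in S^O_m}(p_m(j) - c_m(j))$ that is independent of the buyer decision $S^I_m$. The remaining buyer-side objective for each $i$ is exactly
\begin{align*}
U_i(S^I_i) - I(m \in S^I_i)\, p_m(i) - \sum_{j \in S^I_i,\, j \neq m} c_j(i),
\end{align*}
which, by the defining property of $p^{\max}_m(i)$, has the same $\arg\max$ as Eq. (\ref{eq: comp indiv}) whenever $p_m(i) \leq p^{\max}_m(i)$. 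Hence ${S^I_i}^*$ is preserved for every $i$.

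Second, I would handle the seller side and market clearing. Each $i \neq m$ still prices at cost and therefore remains indifferent over its supply set, willing to meet whatever demand materializes exactly as in the original proof. Agent $m$ now earns a per-sale margin $p_m(j) - c_m(j) \geq 0$ on each realized sale, so (weakly) prefers to supply every buyer who demands, and since data is replicable there is no capacity constraint, while $m$ cannot force additional purchases on non-demanding buyers. Setting ${S^O_i}^* = \{j : i \in {S^I_j}^*\}$ as in Proposition \ref{prop: comp} then clears the market.

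The main subtlety is ensuring that the interval $[c_m(j), p^{\max}_m(j)]$ is nontrivial and well-posed. Since each buyer's feasible demand set is finite (a subset of $\{1,\dots,N\}$), the set of prices for which buyer $i$'s $\arg\max$ coincides with the one in Eq. (\ref{eq: comp indiv}) forms an interval containing $c_m(i)$, and $p^{\max}_m(i)$ is its supremum---strictly above $c_m(i)$ whenever buying from $m$ is a strict best response at cost price. Because buyer $i$'s optimization depends only on $p_m(i)$ and not on $p_m(j)$ for $j \neq i$, these thresholds may be chosen independently across buyers, so the full Cartesian product of the intervals yields a whole family of competitive price schedules extending the one in Proposition \ref{prop: comp}.
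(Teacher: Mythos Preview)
The paper does not supply a proof for this proposition; it is stated immediately after Proposition~\ref{prop: comp} and its remark, with the next line beginning Section~\ref{sec: mech}. Your verification is therefore more than the paper offers, and it is correct: you follow exactly the template of the proof of Proposition~\ref{prop: comp}, checking buyer optimality, seller optimality, and market clearing under the perturbed price for agent~$m$.

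One point worth making explicit concerns the seller side for agent~$m$. In the original construction every seller prices at cost and is \emph{indifferent} over its supply set, which is what lets the paper set ${S^O_i}^*$ to match demand exactly. Once $p_m(j) > c_m(j)$, agent~$m$ strictly prefers to sell to~$j$, so $m$'s unconstrained optimal supply set is all of $\{1,\dots,N\}\setminus\{m\}$, which need not equal $\{j : m \in {S^I_j}^*\}$. You handle this with the remark that ``$m$ cannot force additional purchases on non-demanding buyers,'' which is the right resolution in this indivisible, bilateral-trade setting---a transaction occurs only if both sides consent, so excess willingness to supply is harmless. Since the paper never writes down a formal definition of competitive equilibrium, this is a definitional rather than a mathematical issue, and your treatment is consistent with how the paper uses the term.
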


\section{Unilateral Sharing with Private Types: Mixed-VCG Mechanism}\label{sec: mech}

In this section we extend the model by assuming agents belong in type space $\Theta$, determining their utility from sharing. We further assume types are private information known only to the agents themselves.  There also exists a social planner intermediary who runs a market for coordinating data sharing. We look for market mechanisms that have the following features: they implement approximate social-welfare-maximizing outcomes at dominant equilibrium, they are individually rational, and they are budget balanced. To this end, we propose a variation of the VCG mechanisms---mixed VCG---which uses a mix of data distortion (referred to as ``data money'') and monetary payment to achieve budget balance. We will then investigate how to address communication/message complexity of the VCG mechanisms for data sharing, using specific features of data.

%%However we note, from Myerson-Satterthwaite impossibility result that no Bayes-Nash incentive-compatible mechanism is always simultaneously efficient, weakly budget balanced and ex-interim individually rational, even if agents are restricted to quasilinear utility functions.%

\subsection{Model}
We extend the model in Section \ref{sec: comp}.

\begin{assumption}
Agents are from type space $\Theta$ with distribution $f_{\theta}$. Types are private information.
\end{assumption}

Note that type $\theta_n$ is a determinant of agent $n$'s utility and cost functions. It does not involve  its private data $D_n$.

\begin{assumption}
There exists an intermediary for running the market. The intermediary is a social planner.
\end{assumption}

Furthermore, we extend the game by making the following two assumptions which capture unique features of data economy. We discuss at the end of this section how these assumptions can be relaxed. 
\begin{assumption}\label{ass: zero cost data distortion}[Zero Cost Data Distortion]
    The intermediary has the capacity to distort the quality of data passed to each agent downward or upward, at zero cost to the intermediary.
\end{assumption}
There are several ways the quality of data can be distorted. For example, downward distortion could be achieved by removing part of one agent's data when passed to another, and upward distortion could be achieved by adding new data passed to an agent (perhaps by simulating new data from a model). Alternatively, the intermediary can add noise to data, or to the machine learning model passed to each agent. Later in this section, we will relax the assumption of zero cost distortion to the intermediary.

\begin{assumption}\label{ass: isolated distortion impact}[Isolated Impact of Data Distortion]
We assume that data rationing by the intermediary for data shared with one agent does not change the total utility of the other agents. %%\mr{Potential extension is relaxing the assumption of untapped privacy cost.}
\end{assumption}

Assumption \ref{ass: zero cost data distortion} extends the outcomes of the game from the graph $G$ in Section \ref{sec: bidirectional sharing} to a weighted directed graph $\hat{G}$ defined below.

Assumption \ref{ass: isolated distortion impact} implies that other agents do not incur a change in their utility or costs from data sharing, such as privacy cost, due to rationing on distortion of another agent's data. This allows for isolating the impact of rationing data passed to an agent only to itself.

\begin{definition}
The outcome of a unidirectional sharing game where the intermediary can change quality of the data is a weighted directed graph $\hat{G}$. The weight of each edge from $i$ to $j$ is denoted by $w_{i,j} \in [0,\infty)$. 
\end{definition}
One way to interpret the weights is as the fraction of the data from $i$ passed by the intermediary to $j$.

We also extend the utility of the agents to be explicitly a function of their type and the weighted directed graph $\hat{G}$:
\begin{align}\label{eq: V vcg}
    V(\hat{G}, \theta_i)-t_i=U(\hat{S}^I_i, \theta_i)-C(\hat{S}^O_i, \theta_n)-t_i.
\end{align} 
Similar to (\ref{eq: total utility}), $U$ is the utility from gaining access to other agents' data, $C$ is the cost of sharing, and $t$ is the monetary payment.

%%\mr{Another potential extension is to find non-VCG mechanisms that implement social welfare maximizing outcome. This is possible because VCG uniqueness requires unrestricted preferences while our problem has specific restricting assumptions for the shape of the utility function.} 

Before introducing our mechanism we provide the following definition which helps in studying the properties of our mechanism.

\begin{definition}
Agents' social welfare is 
\begin{align}\label{eq: SW A}
    SW^A(\hat{G}):=\sum_{i\in \{1,2,..., N\}} \bigg(V(\hat{G}, \theta_i)-t_i\bigg).
\end{align}
Total social welfare is the sum of utilities of all agents plus the payments to the intermediary:
\begin{align}\label{eq: SW T}
    SW^T(\hat{G}):=&SW^A(\hat{G})+\sum_{i\in \{1,2,...,N\}} t_i\\
    =&\sum_{i\in \{1,2,..., N\}} V(\hat{G}, \theta_i).
    \label{eq: agents allocation welfare}
\end{align}
\end{definition}
\begin{remark}  $SW^T$ in (\ref{eq: agents allocation welfare}) is agents' utility only from allocations (including costs imposed to the intermediary for data distortion) excluding payments. The standard classical VCG mechanism implements maximum total social welfare $SW^T$ (often referred to simply as social welfare). VCG does not implement maximum $SW^A$.
\end{remark}

\subsection{Mixed-VCG mechanism}
We introduce a mechanism that we refer to as \emph{Mixed-VCG} to achieve approximately optimal $SW^A$, truthful reporting as a dominant strategy, budget balance and ex-post individual rationality Moreover, the mechanism will have the same $SW^T$ as standard VCG.

\begin{definition}[Mixed-VCG] Agents report their type to the intermediary; agent $i$'s reported type is denoted by $\hat{\theta}_i$s. The intermediary then outputs a weighted directed sharing graph $\hat{G}$ and a payment vector $t$ by taking the following steps.

First the intermediary determines a sharing graph $G^*$ similar to standard VCG and without any distortions,
\begin{align}\label{eq: standard vcg}
    {G}^*&=\arg\max_{{G}\in \mathcal{G}} \sum_{i\in \{1,2,...,N\}} V({G}, {\theta}_i).
\end{align}
 determines $\tilde{t}_i$ as
\begin{align}
   \tilde{t}_i= \sum_{j\neq i} V({G}_{-i}^*, {\theta}_j)-\sum_{j\neq i}V({G}^*, {\theta}_j),
\end{align}
where 
\begin{align}
    {G}_{-i}^*=\arg\max_{{G}\in \mathcal{G}}\sum_{j\neq i} V({G}, {\theta}_i).
\end{align}
and calculates the total budget surplus or deficit of the standard VCG:
\begin{align}\label{eq: Delta}
    \Delta=\sum_{i} \tilde{t}_i.
\end{align}
The intermediary then determines each agent $n$'s real payment $t_n$ and data payment $t^d_n$ (which will be translated into distortion to agent $n$'s data) such that  
\begin{align}\label{eq: mixed payment 1}
     t_n+t^d_n=\tilde{t}_n.
\end{align}
Without loss of generality assume $\tilde{t}_1>\tilde{t}_2>\cdot\cdot\cdot$. If $\Delta>0$---i.e., there is budget surplus in standard VCG---then the intermediary starts from agent $1$ and sets 
\begin{align}\label{eq: td 1}
    t^d_1&=\min\{\tilde{t}_1,\Delta\}\\
    t^{d}_n&=\min\{\tilde{t}_n, \Delta-\sum_{j=1}^{n-1}t^d_1\} \qquad \forall n=2,3,...,N.
    \label{eq: td n}
\end{align}
If $\Delta<0$, i.e., there is a budget deficit, then the intermediary starts from agent $N$ and sets
\begin{align}
    t^d_N&=\max\{\tilde{t}_N,\Delta\}\label{eq: td 1 2}\\
    t^{d}_n&=\max\{\tilde{t}_n, \Delta-\sum_{j=N}^{n+1}t^d_n\} \qquad \forall n=N-1,N-2,...,1.\label{eq: td n 2}
\end{align}
Next the intermediary sets monetary payments as follows:
\begin{align}\label{eq: mixed vcg t}
t_i=\tilde{t}_i-t^d_i \qquad \forall n\in N,
\end{align}
and calculates data distortions, $w^*_{i,j}, \forall i,j$, to form the sharing graph $\hat{G}^*$ so that the following is satisfied:
\begin{align}\label{eq: mixed vcg hat d}
    V(\hat{G}^*, \theta_i)-t_i= V(G^*, \theta_i)- \tilde{t}_i.
\end{align}
\end{definition}

\begin{proposition}
The mixed-VCG mechanism is dominant truthful, budget balanced and ex-post individually rational. It implements agents' social welfare, $SW^A$, whose  suboptimality is exactly equal to the level of budget imbalance of standard VCG. It also implements the same $SW^T$ as standard VCG.  
\end{proposition}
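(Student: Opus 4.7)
The plan is to leverage the construction's central identity \eqref{eq: mixed vcg hat d}, which makes each agent's realized payoff under mixed-VCG coincide with the payoff that same agent would receive from standard VCG under identical reports, and then to inherit each of the three classical VCG guarantees (truthfulness, individual rationality, and welfare value of $SW^T(G^*)$). The first step I would carry out is a bookkeeping lemma: $\sum_{i} t^d_i = \Delta$. Since \eqref{eq: Delta} gives $\sum_i \tilde{t}_i = \Delta$, the greedy capping procedure in \eqref{eq: td 1}-\eqref{eq: td n 2} (taking the entire $\tilde{t}_i$ for as many top-ranked agents as needed when $\Delta > 0$, and symmetrically the bottom-ranked agents when $\Delta < 0$) either saturates $\Delta$ before running out of agents, or never tightens a cap and uses all $\tilde{t}_i$; either way the $t^d_i$ sum to $\Delta$. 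Budget balance then follows directly from \eqref{eq: mixed vcg t}: $\sum_i t_i = \sum_i \tilde{t}_i - \sum_i t^d_i = \Delta - \Delta = 0$.

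For individual rationality and truthfulness, the identity \eqref{eq: mixed vcg hat d} rewrites as $V(\hat{G}^*,\theta_i) - t_i = V(G^*,\theta_i) - \tilde{t}_i$ for every agent. Standard VCG is ex-post individually rational under the hypotheses recalled in the related-work discussion (monotonicity of the choice set, non-negativity of externalities), so the right-hand side is non-negative; hence mixed-VCG is too. For dominant-strategy truthfulness, I would first reinterpret the data-money component: by Assumption \ref{ass: isolated distortion impact}, the distortion applied to agent $i$'s data bundle changes only $V(\cdot,\theta_i)$, and by Assumption \ref{ass: zero cost data distortion} the intermediary can tune the weights $w^*_{i,j}$ to produce any targeted utility deduction for agent $i$ at no social cost. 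Consequently, choosing $w^*$ to satisfy \eqref{eq: mixed vcg hat d} amounts to a fungible transfer of $t^d_i$ from the monetary channel to a utility-deduction channel, leaving agent $i$'s realized payoff as a function of their report $\hat{\theta}_i$ identical to the payoff under standard VCG with the same report. Since truth-telling is a dominant strategy in standard VCG, it remains dominant here.

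Finally, summing the identity over $i$ yields $SW^A(\hat{G}^*) = \sum_i V(G^*,\theta_i) - \sum_i \tilde{t}_i = SW^T(G^*) - \Delta$. Any budget-balanced mechanism satisfies $SW^A = SW^T$, so the ceiling on $SW^A$ is $SW^T(G^*)$ and the gap is exactly $\Delta$, matching the budget imbalance of standard VCG. The allocation rule $G^*$ used in \eqref{eq: standard vcg} is identical to standard VCG's, so the same maximum $SW^T(G^*)$ is the target in both mechanisms; the distortions merely convert the VCG budget surplus or deficit into a welfare adjustment rather than a transfer to or from the intermediary. The main obstacle is making the truthfulness step rigorous: \eqref{eq: mixed vcg hat d} is enforced by the intermediary using the reported $\hat{\theta}_i$, but the argument requires that the deduction induced by the committed $w^*$ on agent $i$'s true type also equals $t^d_i$, so that payoff-equivalence to VCG holds pointwise in the agent's strategy space rather than only on the truthful path. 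Once this type-independent calibration of the distortion is justified by the structure afforded by Assumptions \ref{ass: zero cost data distortion}-\ref{ass: isolated distortion impact}, all remaining properties collapse to standard VCG facts.
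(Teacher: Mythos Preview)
Your proposal is correct and follows essentially the same line as the paper's proof: both pivot on the payoff-equivalence identity \eqref{eq: mixed vcg hat d} to inherit dominant-strategy truthfulness and individual rationality from standard VCG, verify budget balance by showing $\sum_i t_i = 0$ from \eqref{eq: mixed payment 1}--\eqref{eq: mixed vcg t}, and obtain the welfare statements by summing that identity. Your explicit bookkeeping that $\sum_i t^d_i = \Delta$ and the caveat about type-independent calibration of the distortion are finer points the paper leaves implicit, but the overall structure is identical.
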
%
\begin{proof}
To see individual rationality and dominant truthfulness, it is sufficient to note that for the  vector of reported types $(\hat{\theta}_1, \hat{\theta}_2,..., \hat{\theta}_N)$ based on (\ref{eq: mixed vcg hat d}), all agents gain the same utility for both standard VCG and mixed-VCG. Individual rationality and dominant truthfulness then follow from the fact that these properties hold for standard VCG.

Budget balance is a direct consequence of definition of the monetary payments in mixed-VCG, which is the redistribution of the standard VCG budget imbalance across agents according to \eqref{eq: mixed payment 1}-\eqref{eq: mixed vcg t}. This redistribution ensures $\sum_{i=\{1,2,...,N\}} t_i=0$.

From \eqref{eq: mixed vcg hat d}, $\sum_{i=\{1,2,...,N\}} V(\hat{G}^*, \theta_i)-{t}_i=\sum_{i=\{1,2,...,N\}} V(G^*, \theta_i)-\tilde{t}_i$, so from \eqref{eq: SW A} mixed-VCG achieves the same $SW^A$ as the standard VCG.

To see the difference of the $SW^T$
between mixed-VCG and standard VCG, from \eqref{eq: agents allocation welfare}, \eqref{eq: mixed vcg hat d}, budget balance of the mixed-VCG mechanism which induces $\sum_{i=\{1,2,...,N\}} t_i=0$, and \eqref{eq: Delta},
\begin{align}
SW^T_{mixed-VCG}&=\sum_{\{i=1,2,...,N\}} V(\hat{G}^*, \theta_i)\\
&=\sum_{\{i=1,2,...,N\}} V({G}^*, \theta_i)+\sum_{\{i=1,2,...,N\}}t_i-\sum_{\{i=1,2,...,N\}}\tilde{t}_i)\\
&=\sum_{\{i=1,2,...,N\}} V({G}^*, \theta_i)-\Delta=SW^T_{VCG}-\Delta.
\end{align}

\end{proof}

\begin{remark} The NM-VCG mechanism  relies on isolated impact of data distortion in Assumption \ref{ass: isolated distortion impact}, the fact that the intermediary can distort the data allocations to change an individual's utility without changing the others'. This is a property of data economy and not the case for other commodities (for example house allocation). 
\end{remark}

We now consider relaxing Assumption \ref{ass: zero cost data distortion}. While distorting data quality downward is practically at zero cost, for example by adding noise to the data, an upward distortion may be practically costly to the intermediary, for example through training a model for generating new data. To address this issue, we extend the mixed-VCG mechanism. The intermediary can add a fixed base downward distortion of data in a zero cost way, for example fixed noise added to data. Then instead of adding new data, the intermediary just lifts the fixed noise which will be at negligible cost in practice. We call this mechanism \emph{Distorted mixed-VCG} (D-mixed-VCG). Formally consider $w^0_{n,m}$ to be the chosen ex-ante weights placed on the edges. Denote the class of all such sharing graphs as $\hat{\mathcal{G}}_0$ (the set of graphs that $i$ either does not pass his data to $j$ or gives a distorted version at weight $w_{i,j}$. Then intermediary determines a sharing graph $\hat{G}^*_0$ according to
\begin{align}\label{eq: standard vcg 2}
    \hat{G}_0^*&=\arg\max_{\hat{G}_0\in \hat{\mathcal{G}}_0} \sum_{i\in \{1,2,...,N\}} V(\hat{G}_0, {\theta}_i),
\end{align}
and determines $\tilde{t}_i$ as
\begin{align}
   \tilde{t}_i= \sum_{j\neq i} V(\hat{G}_{0,-i}^*, {\theta}_j)-\sum_{j\neq i}V(\hat{G}_0^*, {\theta}_j),
\end{align}
where 
\begin{align}
    \hat{G}_{0,-i}^*=\arg\max_{\hat{G}_0\in \mathcal{\hat{G}}_0} \sum_{j\neq i} V(\hat{G}_{0,-i}^*, {\theta}_i).
\end{align}
the rest of the mechanism is analogous to the mixed VCG mechanism per \eqref{eq: Delta}-\eqref{eq: mixed vcg hat d}.

D-mixed-VCG implements lower $SW^T$ and $SW^A$ compared to mixed-VCG because it adds distortion to the data. The proof is straightforward, and we leave it to the reader.

% Denote the intermediary cost of distorting data for forming graph $\hat{G}$ by $C^{int}(\hat{G})$ where this cost is zero whenever there is no distortion to sharing (agents either share data as is or do not share any data)
% \begin{align}
%     C^{int}(\hat{G})=0, \qquad \forall  \hat{G} \quad s.t. \quad \{w_{i,j}\in \{0,1\}, \forall i,j\in \{1,2,...,N\}\}.
% \end{align}

\begin{remark}
We now discuss the communication complexity of the mixed-VCG mechanism for data sharing by using specific features of data as a commodity. From statistical machine learning results, the agents' utility and costs from sharing data, $U$ and $C$, are of certain structural forms; for example, for several machine learning methods, the marginal value of the new data for learning a model reduces at a rate of $\frac{1}{\sqrt{D}}$ where $D$ is the size of the entire data available to the agent. Using such structural properties allows reducing the complexity of the type space, hence the communication complexity.
\end{remark}

\section{Extensions to Differential Privacy}\label{sec: dif priv}
In this section we briefly show how the model and results can be adapted to a more complicated setting, by bringing differential privacy into the picture. We relax full reporting of data in Assumption \ref{ass: full sharing}, and assume agents make queries over each others' data, where for each query there is a differential privacy cost for the data owner. In this setup, the unit of commodity offered by each agent $n$ is the outcome of a query over its data, rather than its whole raw data. We omit the proofs of this section.

We update the model by extending the results in Section \ref{sec: bidirectional sharing} to \ref{sec: mech}. First, the sharing graph $G$ in Definition \ref{def: sharing graph} is extended with integer weights on the edges $\tilde{w}_{i,j}$ which reflect the number of queries agent $i$ runs on agent $j$'s data. We refer to this graph by $\tilde{G}$. The cost and utility functions in Assumptions \ref{ass: cost structure additive} are extended to be a function of graph $\tilde{S}^O_i$ and $\tilde{S}^I_i$,
$V_i(\tilde{G})=U_i(\tilde{S}^O_i)-C_i(\tilde{S}^O_i)$, where $\tilde{S}^I_i=\{(j,\tilde{w}_{i,j}): j\in N\}$, $\tilde{S}^O_i=\{(j,\tilde{w}_{j,i}): j\in N\}$
%where $u(\tilde{w}_{m,n}, d_m, \theta_n)$ is a strictly concave and increasing function of $w_{m,n}$ and $c^d(\theta_n)$ is the differential privacy cost of type $\theta_n$ from each inquiry on his data. Also, the utility from inquiry on each agent $m$s data is proportional to its size of the data $d_m$. Note that the concavity of $u(w_{m,n}, d_m, \theta_n)$ ensures more inquiries on the same agent $m$'s data has a diminishing return for agent $n$. Therefore agent $n$ eventually stops further inquiries from $m$'s data and searches for other agents with even smaller data sizes than $d_m$.
Given these updates to the model, together with Proposition \ref{prop: core}, we can show that a core stable outcome exists and can be found by updating the ordered match algorithm as follows: every time an agent $n$ swipes an agent $\ell$, it also proposes a set of sharing weights $\{(\tilde{w}_{n,\ell},\tilde{w}_{\ell, n})\}$ that are acceptable to $n$ (the benefits of sharing are greater than costs). Agent $\ell$ then either accepts to share by choosing from this set or rejects the sharing of data. 

Next consider a cost structure in which each agent $i$ has a differential privacy cost which, along the line of (\ref{eq: cost structure additive}) is additively decomposable with respect to the agents he is sharing data with:
\begin{align}
    C_i(\tilde{S}^O_i)=\sum_{j\in \tilde{S}^O_i} \tilde{w}_{i,j} c_i(j).
\end{align}
Then, along the lines of Proposition \ref{prop: comp}, there exists a set of competitive prices where each agent $i$ charges agent $j$ an amount of $c_i(j)$ per inquiry. Moreover, each other agent $j$ is allowed to purchase multiple inquiries from agent $i$. 

Finally, the VCG mechanism in Section \ref{sec: mech} can be updating by considering doubly weighted directed graphs $\hat{\tilde{G}}$ with weights on each edge from $i$ to $j$ being $\tilde{w}_{i,j}$ the number of inquiries of $j$ on $i$'s data, and $w_{i,j}$ the distortion to quality of each inquiry. Then (\ref{eq: V vcg}) can be extended as
\begin{align}\label{eq: V vcg}
    V(\hat{\tilde{G}}, \theta_i)-t_i=U(\hat{\tilde{S}}^I_i, \theta_i)-C(\hat{\tilde{S}}^O_i, \theta_i)-t_i.
\end{align}
The mixed-VCG mechanism is updated to include weights $\tilde{w}_{ij}$ determined by the intermediary.

\section{Conclusions}
\label{sec: conclusion}

We have studied data sharing markets for both bidirectional (data-data exchange) and unidirectional (data-money exchange) sharing. We proved existence of strongly stable outcomes for bilateral sharing via formulating the problem as a network formation game. We achieved this with limited complementarity and without posing utility structure on agents' preferences. For unilateral sharing, we constructed competitive prices that implement socially optimal outcomes, and in the case of agents' private information of their types, we also presented budget balanced mechanisms which truthfully implement approximately optimal outcomes.

%One future direction of research involves relaxing the assumptions used for the results in each section, or showing their necessity. Another direction is extending the model and results to other specific data sharing algorithms similar to that we presented for differential privacy in Section \ref{sec: dif priv}.

%\section*{Broader Impact}

%This is a theoretical work on experimental design, and does not present any foreseeable societal consequence.  

%\bibliographystyle{abbrv}
\bibliographystyle{apalike}
\bibliography{main}

%\newpage
%\appendix
%\input{proofs.tex}

\end{document}